\newtheorem{theorem}{{\sc Theorem}}[section]
\newtheorem{cor}[theorem]{{\sc Corollary}}
\newtheorem{lemma}[theorem]{{\sc Lemma}}
\newtheorem{prop}[theorem]{{\sc Proposition}}
\newtheorem{defin}[theorem]{{\sc Definition}}
\theoremstyle{remark}
\newtheorem{remark}[theorem]{{\sc Remark}}
\def\f{\mathfrak }
\def\b{\mathbb }
\def\mm{\b M}
\def\nn{\b N}
\def\erw{\b E}
\def\phi{\varphi }
\def\calc{{\mathcal C}}
\def\calp{{\mathcal P}}
\def\frg{{\f g}}
\def\frk{{\f k}}
\def\frp{{\f p}}
\def\on{\operatorname}
\def\tra{^{\prime}}
\def\inv{^{-1}}
\def\dpst{\displaystyle}
\begin{document}

\title[Fluctuations of Wigner matrices]{Fluctuations of Wigner-type random matrices associated with symmetric spaces of class DIII and CI.}

\author{Michael Stolz}\thanks{Westf\"alische Wilhelms-Universit\"at M\"unster, Institute of Mathematical Stochastics, Orl\'eans-Ring 10, 48149 M\"unster, Germany. E-mail: michael.stolz@uni-muenster.de. Research supported by Deutsche Forschungsgemeinschaft (DFG) via SFB 878.}

\begin{abstract}
Wigner-type randomizations of the tangent spaces of classical symmetric spaces can be thought of as ordinary Wigner matrices on which additional symmetries have been imposed. In particular, they fall within the scope of a framework, due to Schenker and Schulz-Baldes, for the study of fluctuations of Wigner matrices with additional dependencies among their entries. In this contribution, we complement the results of these authors in that we develop a calculus of patterns which makes it possible to control the asymptotic contributions of dihedral non-crossing pair partitions for the Cartan classes DIII and CI, thus obtaining explicit CLTs for these cases.

\end{abstract}

\maketitle

\section{Introduction}

Much of random matrix theory is 
concerned with probability measures on the spaces of hermitian,
real symmetric, and quaternion real matrices. One of the reasons for this focus is the fact, proved by Freeman Dyson (\cite{Dyson}),
that any hermitian matrix (thought of as a
truncated Hamiltonian of a quantum system) that commutes with a
group of unitary symmetries and ``time reversals'' breaks down to
these three constituents, referred to by Dyson as the ``threefold way'', which are, in structural
terms, the tangent spaces to the Riemannian Symmetric Spaces (RSS)
of class A, AI and AII. But by the 1990s, the full ``tenfold way'' of infinite series of irreducible 
RSS had found its way into physics models, in particular in condensed matter theory. The structural reasons are explained in \cite{AS, AZ, HHZ, ZirnHbk, RSFL, AK}. Some basic aspects are summarized in \cite{azldp}.\\

Wigner's famous
result of 1958 (\cite{Wigner58}) states that for a symmetric
matrix with independent entries on and above the diagonal, the
mean empirical spectral distribution converges weakly to the
semicircle distribution as matrix size tends to infinity.
It is the prototype of a universality result in that it only depends on certain
assumptions about the moments of the matrix entries, but not on
the specifics of their distributions, and has been extended to the full tenfold way by Hofmann-Credner and the author in \cite{ECP}. It turns out that the semicircle distribution remains the limit law of the empirical spectral measures for seven of the ten families, while for the ``chiral'' classes with the Lie-theoretic labels AII, BD1, CII, it has to be replaced by a suitably transformed Mar\v{c}enko-Pastur distribution.\\

Wigner's theorem has been complemented by results about the corresponding fluctuations, initially under the assumption of Gaussianity of the matrix entries (\cite{Joh}). A level of generality comparable to Wigner's set-up was reached by Kusalik, Mingo, and Speicher in \cite{KMS}, even though that paper treats sample covariance rather than Wigner type matrices. The Wigner case is contained in the article \cite{SSB2} by Schenker and Schulz-Baldes. Actually, their work does much more, in that it substantially weakens the independence assumption on the upper-diagonal entries of standard Wigner matrices, to the effect that, e.g., additional symmetries can be enforced. A full statement of their main result will be given below. In the case of sample covariance matrices, an analogous weakening of independence assumptions was achieved by Friesen, L\"owe, and the author in  \cite{JMVA}.\\

While \cite{SSB2, JMVA} contain results on a high level of generality, additional work is required if one aims at explicit Central Limit Theorems for the fluctuations in specific models with explicitly given, non-trivial symmetries. The present contribution sets out to develop some convenient bookkeeping tools which are mighty enough to guide the explicit evaluation of formulae for the asymptotic variances of Wigner-type matrices associated with the symmetric spaces of Class DIII and CI. These classes are neither classical Wigner-Dyson nor chiral, and their role in the modeling of mesoscopic normal-superconducting hybrid structures has been pointed out by Altland and Zirnbauer in \cite{AZ}.\\

Section \ref{objects} below contains the definitions of the matrix ensembles of interest, as well as the statement of our Gaussian limit theorem. Section \ref{implementing} provides as much background on the framework set up by Schenker and Schulz-Baldes as is necessary to make the present work independent of any previous acquaintance with the details of \cite{SSB2}. Section \ref{d3} treats the case of Class DIII and, along the way, develops the core material of our approach. Section \ref{c1} adapts that material to the slightly different circumstances encountered in the case of Class CI.

\section{Matrix ensembles and results}
\label{objects} 

If $\frg$ is the Lie algebra, i.e., the tangent space (at any point) of a compact Lie group $G$, then $i \frg = \sqrt{-1}\, \frg$ is a space of hermitian matrices. If $\frg = \frk \oplus \frp$ is its decomposition into the $(+1)$-eigenspace $\frk$ and the $(-1)$-eigenspace $\frp$ of a (Cartan) involution, and $\frk$ is the Lie algebra of a compact Lie group $K$, then $\frp$ can be viewed as the 
tangent space of the Riemannian Symmetric Space $G/K$, and $i \frp$ as a convenient proxy if one prefers doing random matrix theory for matrices with real eigenvalues. In a nutshell, this is the rationale for attaching Lie-theoretic labels to the following two spaces of hermitian matrices.\\

\begin{description}

\item[Class DIII] $$ \mm_n^{{\rm DIII}} = \left\{ \left(
\begin{array}{rr} X_1 & X_2 \\ X_2 & -X_1
\end{array} \right):\ X_i \in (i {\mathbb R})^{n \times n}\ \text{\rm skew symmetric} \right\}$$

\item[Class CI] $$  \mm_n^{{\rm CI}} = \left\{ \left(
\begin{array}{rr} X_1 & X_2 \\ X_2 & -X_1
\end{array} \right):\ X_i \in {\mathbb R}^{n \times n}\ \text{\rm symmetric} \right\}$$
\end{description}

These spaces can be turned into Wigner type ensembles of random matrices as follows. Each matrix space induces a finite partition $\{B_i: i \in I\}$ (uniquely determined by the requirement that it possess minimal number of blocks) of the set of index pairs in $\{1, \ldots, 2n\}^{\times 2}$ 
which do not correspond to the diagonals of skew-symmetric blocks, such that for all $i \in I$ the following property holds:  As soon as the matrix entry which corresponds to an index pair from $B_i$ is determined, the matrix entries corresponding to all index pairs from $B_i$ are determined as well.  
Let $\{(p_i, q_i): i \in I\}$ be a system of representatives for the blocks, and let $(a_n(p_i, q_i))_{i \in I}$ be a family of independent centered complex random variables with the following properties:
\begin{itemize}
\item For each $k \in \nn$ the $k$-th absolute moments $\erw|a_n(p_i, q_i)|^k$ are uniformly bounded in $n$ and $i$.
\item There exists $\sigma^2 > 0$ such that $\erw|a_n(p_i, q_i)|^2 = \sigma^2$ for all $i \in I$.
\end{itemize}
Note that we have defined $I$ in such a way that the latter condition does not lead to conflict with the fact that the diagonal elements of skew-symmetric matrices are zero. With the 
$(a_n(p_i, q_i))_{i \in I}$ at hand, for $(r, s) \in B_i$ we define the matrix entry $a_n(r, s)$ as an identical copy (or possibly the negative of an identical copy) of $a_n(r, s)$, according to which algebraic relations among entries give rise to the class $B_i$. If $(r, s)$ corresponds to the diagonal of a skew symmetric block, we set $a_n(r, s) = 0.$\\

With these definitions in place, we define the matrix ensembles of interest as
$$ X_n^{\calc} =  \frac{1}{\sqrt{2n}} (a_n(p, q))_{p, q = 1, \ldots, 2n},$$
where $\calc \in \{\on{DIII}, \on{CI}\}.$ If $\calc$ is clear from the context, we drop the superscript.\\

To state our results, denote by $\on{T}_m$ the $m$-th Chebychev polynomial of the first kind, i.e., the one which satisfies the identity $\on{T}_m(2 \cos(\theta)) = 2 \cos(m\theta)$. Write $\on{T}_m(\cdot, \sigma)$ for the re-scaled version given by
$$ \on{T}_m(x, \sigma) = \sigma^m \on{T}_m\left( \frac{x}{\sigma}\right).$$

\begin{theorem}
\label{main}
Let $\calc \in \{ \on{DIII}, \on{CI}\}$. 
For each $M \in \nn$, the random vector
$$ \left(
(\on{Tr}(\on{T}_1(X_n^{\calc}, \sigma)) - \erw(\on{Tr}(\on{T}_1(X_n^{\calc}, \sigma)))), \ldots, 
(\on{Tr}(\on{T}_M(X_n^{\calc}, \sigma)) - \erw(\on{Tr}(\on{T}_M(X_n^{\calc}, \sigma))))
\right)$$
converges, as $n \to \infty$, to a centered Gaussian vector with covariance matrix 
$$\on{diag}(V^{\calc}(1), \ldots, V^{\calc}(M)),$$
where

$$ V^{\calc}(m) = \begin{cases} 
0, & m = 1,\\
\ast, & m=2,\\
0, & m \ge 3\ \text{\rm odd},\\
4 m \sigma^{2m}, & m \ge 4\ \text{\rm even}.
\end{cases}
$$

\end{theorem}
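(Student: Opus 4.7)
The plan is to split the argument by the parity of $m$. For odd $m$, the variance will be zero on deterministic grounds, and for even $m \ge 4$ I would evaluate the asymptotic covariance via the Schenker--Schulz-Baldes framework of Section~\ref{implementing}, equipped with the pattern calculus that Sections~\ref{d3}--\ref{c1} will develop.

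\emph{Odd modes.} First I would establish a spectral $\pm$-symmetry for both families. Setting $J = \begin{pmatrix} 0 & I_n \\ -I_n & 0 \end{pmatrix}$, a one-line block computation gives $JX = -XJ$ for every $X \in \mm_n^{\on{DIII}} \cup \mm_n^{\on{CI}}$, so $J\inv X J = -X$ and the eigenvalues of $X$ come in pairs $\pm\lambda$. Consequently $\on{Tr}(X^k) = 0$ for every odd $k \in \nn$ and every realisation of $X_n^{\calc}$. Since $\on{T}_m(\cdot,\sigma)$ contains only monomials whose degree has the parity of $m$, it follows that $\on{Tr}(\on{T}_m(X_n^{\calc},\sigma)) \equiv 0$ for every odd $m$. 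This disposes of $V^{\calc}(1)$, of $V^{\calc}(m)$ for all odd $m \ge 3$, and indeed of every asymptotic covariance between an odd-indexed and any other entry of the vector, reducing the task to the analysis of even modes of index $\ge 4$ (the case $m=2$ being treated as a separate residual bookkeeping step, marked by $\ast$).

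\emph{Even modes.} For even $m,\ell \ge 4$, I would invoke the SSB reduction, which expresses
$$\lim_{n\to\infty}\on{Cov}\bigl(\on{Tr}(X_n^k),\on{Tr}(X_n^\ell)\bigr)$$
as a finite sum over non-crossing dihedral pair partitions of the $k+\ell$ edges of a pair of oriented cycles, each summand weighted by the number of times the partition is realised by the dependency blocks $\{B_i\}$ coming from the Cartan data. By multilinearity, expanding $\on{T}_m$ and $\on{T}_\ell$ into monomials $\on{Tr}(X_n^{m-2j})$ reduces the statement to a combinatorial identity indexed by these patterns. The pattern calculus then groups the dihedral partitions into orbits of the symmetry group acting on index pairs, generated by hermiticity $(p,q)\mapsto(q,p)$, the block-swap $(p,q)\mapsto(p+n,q+n)$ together with the sign dictated by the $-X_1$ corner, and the (skew-)symmetry of $X_1$, which is precisely what sets DIII apart from CI.

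\emph{Expected main obstacle.} The technical heart of the proof will be the enumeration and cancellation analysis: one must isolate the dihedral non-crossing partitions whose block-structure weight survives the sign cancellations imposed by skew-symmetry (DIII) or symmetry (CI), and add up the remainder. I expect the outcome to be that, on top of the GOE-type contribution of $2m$ which Chebychev polynomials are known to diagonalise on the Wigner side, the block-swap symmetry contributes an additional multiplicative factor $2$ at each even mode, yielding the asserted $4m\sigma^{2m}$, while the off-diagonal covariances between distinct even modes cancel pattern by pattern. Verifying this cancellation systematically across all pattern classes, and confirming that the constant terms introduced by the Chebychev expansion (which are what will ultimately explain why the Chebychev basis, rather than the monomial basis, diagonalises the covariance) are absorbed correctly, is where I anticipate the main combinatorial difficulty.
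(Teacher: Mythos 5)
Your treatment of the odd modes is correct and is genuinely different from (and more elementary than) the paper's. With $J = \left(\begin{smallmatrix} 0 & I_n \\ -I_n & 0\end{smallmatrix}\right)$ one indeed checks $JX = -XJ$ for every matrix of the common block shape, so the spectrum is symmetric about $0$, $\on{Tr}(X^k)=0$ deterministically for odd $k$, and since $\on{T}_m(\cdot,\sigma)$ has the parity of $m$, all odd-indexed coordinates of the random vector vanish identically. The paper instead obtains the vanishing for odd $m\ge 3$ as a by-product of its sign bookkeeping (Lemma \ref{gleichbleibtgleich} and Lemma \ref{tauoddeven}); your observation short-circuits that and also kills all covariances involving an odd mode for free. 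This part is a clean improvement in economy.

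For the even modes, however, there is a genuine gap: you state the strategy but do not carry out the computation that constitutes the actual content of the proof. Proposition \ref{asymptcov} (the SSB input) already gives the asymptotic diagonality $\delta_{m\mu}$ and reduces $V^{\calc}(m)$ for $m\ge 3$ to the explicit sum $\frac{1}{(2n)^m}\sum_{g\in\on{D}_{2m}}\sum_{\mathbf{P}\in S_n^{\on{good}}(\hat\pi_g)}\prod_l \erw(a_n(P_{1,l})a_n(P_{2,g(l)}))$; so your proposed detour through monomial traces $\on{Tr}(X_n^{m-2j})$ and a pattern-by-pattern cancellation of off-diagonal covariances is both unnecessary and harder (the monomial-level description of contributing partitions is, as the paper notes, only approximately a statement about non-crossing pair partitions, and you would have to re-prove the diagonalization that the Chebyshev basis hands you). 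What remains to be done, and what you explicitly defer as an ``expected obstacle,'' is precisely the enumeration: classifying, for each $g=\gamma^\nu$ and $g=\tau\gamma^\nu$, which index configurations contribute at order $n^m$ (the domino and reverse-domino conditions plus the $A$ versus $R$ alignment of Lemma \ref{substantial}), determining the sign $\pm\sigma^2$ attached to each of the eight $\Delta_l$ in each of the two Cartan classes, and counting the surviving $\Delta$-sequences ($2^{m-1}$ per group element after fixing the free choices, leading to $2\sigma^{2m}$ per $g$ and hence $4m\sigma^{2m}$ over the $2m$ elements of $\on{D}_{2m}$). Your guess that the answer is ``$2m$ from the GOE-type diagonalization times $2$ from the block swap'' happens to match the final constant, but it is not derived from the combinatorics, and the sign cancellations for DIII (skew-symmetric blocks, where identical rows contribute $(-1)^m\sigma^{2m}$) versus CI (symmetric blocks) are exactly where a heuristic of this kind could silently fail. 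As it stands, the proposal proves the odd case and the reduction, but not the value $4m\sigma^{2m}$.
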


\begin{remark}
In the statement of Theorem \ref{main}, the value of $V^{\calc}(2)$ is unspecified, since it depends on the fourth moments of the $a_n(p, q)$, about which, apart from uniform boundedness, no assumptions are made in the present paper. Understanding the special role of $m=2$ requires substantially more background on the details of the proofs in \cite{SSB2} than what will be provided in Section \ref{implementing} below. In a nutshell, the reason is that for small $m$ the reduction to dihedral noncrossing pair partitions is subject to a few caveats.
\end{remark}

\section{Implementing symmetries: A review of SSB theory}
\label{implementing}

This section reviews some background on standard Wigner matrices (as opposed to the Wigner-type matrices which were introduced in Section \ref{objects}), where, for the purposes of this paper, a standard $n \times n$ Wigner matrix will be an hermitian matrix 
$$ X_n =  \frac{1}{\sqrt n} (a_n(p, q))_{p, q = 1, \ldots, n}$$ such that the family $a_n(p, q), p \le q$ of matrix entries on and above the diagonal consists of independent, but not necessarily identically distributed, random variables. We will assume that the $a_n(p, q)$ are centered, and for each $2 \le k \in \nn$ we require that the $k$-th moments $\erw|a_n(p, q)|^k$
be finite and uniformly bounded in $n, p, q$. It is customary to assume that the variance of the entries be equal to $\sigma^2 > 0$ for all $n, p, q$, and we will make this part of our notion of a standard Wigner matrix.\\

Our task is now to implement additional symmetries in standard Wigner matrices. A useful tool to this end is a formalism which has been introduced by Schenker and Schulz-Baldes (henceforth SSB) in \cite{SSB}. Writing $[n]$ for the set $\{ 1, 2, \ldots, n\}\ (n \in \nn)$, the idea is as follows: Given an equivalence relation $\sim_n$ on $[n]^2$, one stipulates that matrix entries $a_n(p_1, q_1), \ldots, a_n(p_{\nu}, q_{\nu})$ be independent whenever $(p_1, q_1), \ldots, (p_{\nu}, q_{\nu})$ are elements of $\nu$ different equivalence classes of $\sim_n$, whereas the joint distribution of a family $a_n(P)$, where $P$ runs through an equivalence class of $\sim_n$, is arbitrary. The interpretation we have in mind is that $\sim_n$ captures the symmetries that define the different matrix spaces from Section \ref{objects}, and that matrix entry random variables that correspond to $\sim_n$-equivalent index pairs are, up to a sign, identical (as measurable maps). But note that the scope of the SSB formalism is much broader.\\

The fluctuation results of SSB in \cite{SSB2}, which we are going to use, assume that certain quantitative characteristics of the $\sim_n$-equivalence classes do not grow too fast as a function of $n$. Specifically, it is assumed that the quantity 
\begin{equation}
 \label{alpha2} \alpha_2(n) = \max_{p, q \in [n]^2} \# \{ (r, s) \in [n]^2:\ (p, q) \sim_n (r, s)\}
\end{equation}
 is of order $\on{O}(n^{\epsilon})$ for all $\epsilon > 0$, and that $\hat{\alpha}_0(n) \alpha_2(n)^{\eta} = \on{o}(n^2)$ for all $\eta > 0$, where
\begin{equation} 
\label{alpha0}
\hat{\alpha}_0(n) = \# \{ (p, q, r) \in [n]^3:\ (p, q) \sim_n (q, r)\ \text{\rm and}\ p \neq r\}.
\end{equation}
Inspection of the list of spaces in Section \ref{objects} yields that for the cases of interest, $\alpha_2(n) \le 4$ and 
$\hat{\alpha}_0(n) = 0.$ So these conditions are always satisfied in our applications.\\

For investigating the fluctuations of the empirical spectral measure
of a standard Wigner matrix about its limit, i.e., the semicircle distribution, a well-established proxy are the fluctuations of  vectors of traces of nonnegative integer powers $\on{Tr}(X^k_i)$ ($i$ running through a finite set $I$). In \cite[Thm.\ 2.1]{SSB2} that we are going to treat as a black box, it is shown that the condition on the growth of $\alpha_2(n)$ alone suffices to guarantee that the joint cumulants of order $\ge 3$ of a family of 
$\on{Tr}(X^k_i)$ vanish as $n \to \infty$. This already yields a Gaussian limit theorem for the fluctuations, albeit one with possibly degenerate asymptotic variances.\\

The aspect of \cite{SSB2} which will be our starting point for explicit computiations, and which uses the growth condisions on both $\alpha_2$ and $\hat{\alpha}_0$, is concerned with expressions for the asymptotic cumulants of order $2$, i.e., the asymptotic covariances. Before stating what \cite{SSB2} has to say about these objects on the level of generality it aims at, let us digress a little and explain what computing covariances (or, for that matter, higher order cumulants) of traces of powers amounts to. As a matter of fact,
$$ \on{Tr}(X_n^k) = \frac{1}{n^{k/2}} \sum a_n(p_1, q_1) a_n(p_2, q_2) \ldots a_n(p_k, q_k),$$
where the sum is over all sequences of index pairs $(p_l, q_l) \in [n]^2$ that satisfy the {\it consistency relations}
\begin{equation}
\label{konsistenz}
 p_2 = q_1, p_3 = q_2, \ldots, p_k = q_{k-1}, p_1 = q_k.\end{equation}
Expanding bilinearly, this yields
\begin{equation}
\label{masterexpansion}
\on{Cov}\left( \on{Tr}(X_n^{k_1}), \on{Tr}(X_n^{k_2})\right)
= \frac{1}{n^{k/2}} \sum \on{Cov}\left( \prod_{l=1}^{k_1} a_n(P_{1, l}), \prod_{l=1}^{k_2} a_n(P_{2, l})\right),
\end{equation}
where the sum is over multi-indices $\mathbf{P} = (P_{i, l})_{i = 1, 2, l = 1, \ldots, k_l} = (p_{i, l}, q_{i, l})_{i = 1, 2, l = 1, \ldots, k_l}$ such that the $(p_{i, l}, q_{i, l})\ (i = 1, 2)$ satisfy consistency relations as in \eqref{konsistenz}.\\

If we follow \cite{SSB2} and denote by $\calp_{[k_1] \cup [k_2]}$ the set $\{ (i, l):\ i = 1, 2, l \in [k_i]\}$, then a crucial observation is as follows: The equivalence relation $\sim_n$ on $[n]^2$ induces an equivalence relation $\pi$ on $\calp_{[k_1] \cup [k_2]}$ by requiring that $(i, l)$ and $(i\tra, l\tra)$ belong to the same class of $\pi$ if, and only if, $ P_{i, l} \sim_n P_{i\tra, l\tra}$. This makes it possible to classify the multi-indices
$\mathbf{P} = (P_{i, l})_{i = 1, 2;\ l = 1, \ldots, k_l}$ according to which $\pi$ they induce on $\calp_{[k_1] \cup [k_2]}$. The technical core of the proofs in \cite{SSB2}, then, consists in arguments to the effect that certain types of $\pi$ are always associated to summands which, if multiplied by the prefactor $\frac{1}{n^{k/2}}$, give a negligible contribution in the $n \to \infty$ limit. It turns out that the summands in \eqref{masterexpansion} which actually contribute to the limit are all associated to partitions of $\calp_{[k_1] \cup [k_2]}$ of a very restrictive type. Observe that a partition of $\calp_{[k_1] \cup [k_2]}$ typically consists both of blocks which are subsets of $\{1\} \times [k_1]$ or of $\{2\} \times [k_2]$, and of blocks which contain elements of both sets. If we think of the elements of $\{1\} \times [k_1]$ as marked points on the inner boundary circle of an annulus in the complex plane, and of the elements of $\{2\} \times [k_2]$ as marked points on the outer circle, then the latter kind of blocks can be thought of as connections between the circles. A pictorial (and only approximately correct) description of the partitions which contribute nontrivially to the large $n$ limit is as follows: All blocks are pairs, and the lines that connect the circles do not cross.\\

Rather than make this description more precise, which would require  a host of extra terminology, we take advantage of the fact that the problem to describe asymptotically non-negligible partitions has a much neater solution if one replaces the covariance in \eqref{masterexpansion} with the covariance of traces of Chebyshev polynomials in matrix arguments.\\

Let $g: [m] \to [m]$ be a bijection which, as an element of the symmetric group $\on{Sym}([m])$, is contained in the subgroup $D$ generated by the cycle $\gamma = (1\, 2\, 3\, \ldots m)$ and the involution $\tau = (1\, m)(2, m-2)\ldots (\frac{m}{2}-1\, \frac{m}{2})$ if $m $ is even, or $\tau =
(1\, m)(2, m-2)\ldots (\frac{m+1}{2}-1, \frac{m+1}{2} + 1)$ if $m$ is odd. Note that $\gamma$ maps $m$ to $1$, $1$ to $2$, and so on, and that $\tau$ swaps $1$ with $m$, $2$ with $m-1$ and so on, and fixes $\frac{m+1}{2}$ if $m$ is odd. It is easily verified that $\tau\inv \gamma \tau = \tau \gamma \tau = \gamma\inv$, so $D$ is isomorphic to the dihedral group $\on{D}_{2m}$, see, e.g., \cite[Lemma 2.14]{Isaacs}. Then SSB define a partition 
$\hat{\pi}_g := \{ \{(1, l), (2, g(l)\}:\ l \in [m]\}$ and call it a dihedral non-crossing pair partition. Note that if we embed $[m]$ on the unit circle as the set of $m$-th roots of unity (with the natural labelling), then $\on{D}_{2m}$ can be thought of as the set of permutations that map neighboring points to neighboring points. This provides a link to the informal description of the asymptotically contributing partitions given above.\\

We are now in a position the state a result from \cite{SSB2} which we are going to apply subsequently.

\begin{prop}{\cite[Thm.\ 2.4]{SSB2}}
\label{asymptcov} 
~
$$
\on{Cov}(\on{Tr}(\on{T}_m(X_n,\sigma)),\on{Tr}(\on{T}_{\mu}(X_n,\sigma))) \ = \ \delta_{m\mu}  V_n(m)  +  \on{o}(1) ,
$$

with

\begin{equation}
 V_n(m) \ = \  \begin{cases} \frac{1}{n}
 \dpst \sum_{\substack{p,q \\ (p,p) \sim_n(q,q)}}
\erw(a_n(p,p) a_n(q,q)), & m = 1, \\
\frac{1}{n^2}
\dpst \sum_{\substack{p,q,r,s, \\ p \neq q\, , \ r \neq s, \\
(p,q) \sim_n (r,s)} } \on{Cov}( |a_n(p,q)|^2 ,|a_n(r,s)|^2 ), & m = 2,\\
\frac{1}{n^m} \dpst \sum_{g\in \on{D}_{2m}}
\dpst\sum_{\mathbf{P}\in S_n^{\on{good}}(\hat{\pi}_g)}
 \prod_{l=1}^m \erw(a_n(P_{1,l}) a_n(P_{2,g(l)})), & m \ge 3.
                                            \end{cases}
\end{equation}
Here $S_n^{\on{good}}(\hat{\pi}_g)$ denotes the set of multi-indices which induce the partition $\hat{\pi}_g$ on $\calp_{[m] \cup [m]}$ and do not meet a coordinate pair for which the corresponding matrix entry must be zero in view of the symmetries of the matrix spaces from Section \ref{objects} (i.e., the diagonal of a skew symmetric block in the DIII case).
\end{prop}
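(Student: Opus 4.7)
The plan is to proceed from the bilinear expansion \eqref{masterexpansion}. First, expand each Chebyshev polynomial trace $\on{Tr}(\on{T}_m(X_n,\sigma))$ as a linear combination of traces of powers $\on{Tr}(X_n^k)$ with $0 \le k \le m$, using the explicit coefficients of $\on{T}_m(\cdot,\sigma)$. Applying \eqref{masterexpansion} to each resulting bilinear term, the covariance becomes a finite linear combination of sums over multi-indices $\mathbf{P}$ satisfying consistency relations, each of which induces an equivalence partition $\pi$ on $\calp_{[k]\cup[k']}$. By the cumulant vanishing statement underlying \cite[Thm.~2.1]{SSB2}, together with the growth condition on $\alpha_2(n)$, only those $\mathbf{P}$ whose induced $\pi$ is a pair partition can contribute to the limit.

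The next step is to classify the contributing pair partitions. The consistency relations \eqref{konsistenz} organize the index pairs into two cyclic sequences, which one visualizes as marked points on the inner and outer boundaries of an annulus. A counting argument balancing free indices against the prefactor $n^{-(k+k')/2}$ forces $k+k'$ to be even and, crucially using the hypothesis on $\hat{\alpha}_0(n)$, restricts the surviving pair partitions to those whose through-strings between the two circles do not cross, i.e., to annular non-crossing pair partitions. At this stage, the analysis is close in spirit to the genus-zero arguments of Mingo--Speicher type annular calculus, adapted to the $\sim_n$-induced equivalence.

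Finally, the Chebyshev structure is exploited to collapse annular non-crossing pair partitions to the dihedral ones: the identity $\on{T}_m(2\cos\theta) = 2\cos(m\theta)$ and the orthogonality of the $\on{T}_m$ with respect to the semicircle law produce exactly the cancellations that eliminate non-crossing pair partitions containing one-sided blocks and non-dihedral matchings of through-strings. The surviving configurations are the $\hat{\pi}_g$ for $g \in \on{D}_{2m}$, which in particular forces $m = \mu$ and yields the Kronecker $\delta_{m\mu}$ in the statement. The restriction to $\mathbf{P} \in S_n^{\on{good}}(\hat{\pi}_g)$ then enters to discard multi-indices hitting the forced-zero diagonals of the skew-symmetric blocks in the DIII case. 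I expect the main obstacle to be this last reduction: the Chebyshev combinatorics is where the bookkeeping is most delicate, and it is also responsible for the exceptional treatment of $m = 1$ and $m = 2$, for which the trivial structure of $\on{T}_1(x,\sigma) = x$ and $\on{T}_2(x,\sigma) = x^2 - \sigma^2$ forces a separate direct computation rather than a dihedral reduction.
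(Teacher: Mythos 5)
The paper does not prove this proposition at all: it is imported, up to one modification, from \cite[Thm.\ 2.4]{SSB2} and explicitly treated as a black box. Your proposal instead sketches a proof of the cited theorem itself, and the outline you give --- bilinear expansion via \eqref{masterexpansion}, reduction to pair partitions, restriction to annular non-crossing configurations using the growth conditions, and finally the Chebyshev structure collapsing everything to the dihedral pairings $\hat{\pi}_g$ (which also forces $m = \mu$) --- is a fair description of how the argument in \cite{SSB2}, and the related annular calculus of \cite{KMS}, actually runs. Two attributions are nevertheless off: the elimination of non-pair partitions from the covariance is not a consequence of the higher-cumulant statement of \cite[Thm.\ 2.1]{SSB2} but of a separate counting estimate (singleton blocks die by centering, blocks of size $\ge 3$ are suppressed by the $\alpha_2$ bound against the prefactor), and the condition on $\hat{\alpha}_0$ controls degenerate index configurations of the form $(p,q) \sim_n (q,r)$ rather than the crossing structure of through-strings as such.

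The more substantive issue is that you treat the restriction to $S_n^{\on{good}}(\hat{\pi}_g)$ as definitional, whereas this is precisely the one point at which the present paper has to supply an argument of its own: the index set in \cite{SSB2} only avoids the main diagonal, while $S_n^{\on{good}}$ must in addition avoid the diagonals of the skew-symmetric blocks in the DIII case. The justification is that the number of excluded index pairs grows like $n$ against $n^2$ matrix entries in total, so after the $n^{-m}$ normalization the modified sums agree with the original ones up to $\on{o}(1)$; a more careful version of this argument rests on Remark 4.7 of \cite{SSB2}. If your sketch is meant to establish the proposition in the form stated here, that comparison step must be made explicit; everything else in your plan is a reconstruction of the external reference rather than of anything argued in this paper.
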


Note that in the statement of Prop.\ \ref{asymptcov} we have inserted our $S_n^{\on{good}}(\hat{\pi}_g)$ in the place of a slightly differently defined set of multi-indices in \cite{SSB2}. In fact, the multi-indices of \cite{SSB2} only stay clear of the main diagonal. Since the number of exceptions grows like $n$, and the total number of matrix entries grows like $n^2$, it is not very surprising that these modifications should do no harm. A more careful argument can be based on Remark 4.7 in \cite{SSB2}.

\newpage

\section{Class DIII}
\label{d3}
\subsection{Fundamentals}
\label{d3fundamentals}
The following basic lemma can be immediately read off from the definition of the DIII matrix space in Section \ref{objects}.

\begin{lemma}
\label{blocksd3}
Let $a, b \in [n], a \neq b$. Write
\begin{align*}
\calc_1(a, b) &:= \{ (a, b), (b, a), (n+a, n+b), (n+b), (n+a) \},\\
\calc_2(a, b) &:= \{ (n+a, b), (n+b, a), (a, n+b), (b, n+a)\},
\end{align*} and observe that $\calc_i(a, b) = \calc_i(b, a)\ (i = 1, 2).$
Then the equivalence classes which are induced on $[2n] \times [2n]$ by the symmetries of the tangent space of class DIII are precisely the $\calc_i(a, b)$ for $i = 1, 2, a, b \in [n], a \neq b$.
\end{lemma}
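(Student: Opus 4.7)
The plan is to read the partition directly off the $2 \times 2$ block structure of a DIII matrix, block by block. Write a generic element as $M = \begin{pmatrix} X_1 & X_2 \\ X_2 & -X_1 \end{pmatrix}$, and recall that in this ensemble all four blocks are skew-symmetric, so the admissible index pairs are $[2n]^2$ with the four ``block diagonals'' $\{(a,a),\, (n+a,n+a),\, (a,n+a),\, (n+a,a) : a \in [n]\}$ removed. For an admissible pair $(p,q) = (a,b)$ in the upper-left block (so $a, b \in [n]$, $a \neq b$) the entry $M_{a,b} = (X_1)_{a,b}$ is tied by skew-symmetry of $X_1$ to $(b,a)$; by the identification of the lower-right block with $-X_1$ to $(n+a,n+b)$; and by combining both, to $(n+b,n+a)$. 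These four positions are exactly $\calc_1(a,b)$, and $\calc_1(a,b) = \calc_1(b,a)$ is manifest. An analogous analysis starting from $(p,q) = (a, n+b)$ with $a \neq b$ uses that both off-diagonal blocks equal $X_2$, together with the skew-symmetry of $X_2$, to produce the four positions in $\calc_2(a,b)$.

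Next I would check that the $\calc_i(a,b)$ exhaust the admissible pairs and are pairwise disjoint. Exhaustiveness is a short case analysis on which of the four blocks contains $(p,q)$: in each case the indices $a, b \in [n]$ are recovered from the block coordinates modulo $n$, and the restriction $a \neq b$ precisely excludes the diagonal positions already removed. Disjointness is clear because $\calc_1$-classes lie in the diagonal blocks while $\calc_2$-classes lie in the off-diagonal blocks, and within each family distinct unordered pairs $\{a,b\}$ yield disjoint four-element sets. Minimality follows because the only forced relations among the entries of $M$ are skew-symmetry of $X_1$ and $X_2$ and the identifications between the four blocks; these are exactly the relations used above, so no coarser partition is compatible with the DIII symmetries.

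I do not expect any genuine obstacle here: the lemma is essentially bookkeeping dictated by the block structure. The only point requiring slight care is keeping track of the fact that \emph{all} four blocks, including the off-diagonal ones, are skew-symmetric, so the exclusion of diagonal entries must be enforced in the off-diagonal blocks as well; this is what accounts for the hypothesis $a \neq b$ in the definition of $\calc_2(a,b)$.
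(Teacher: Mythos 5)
Your proof is correct and is simply a careful write-out of the direct block-by-block bookkeeping that the paper itself treats as immediate (it offers no proof, stating the lemma "can be immediately read off from the definition"). The one point you flag -- that the off-diagonal blocks are also skew-symmetric, which forces $a\neq b$ in $\calc_2(a,b)$ -- is exactly the detail worth making explicit.
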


Now we are going to set up a language which will allow to apply the general theory that was reviewed in Section \ref{implementing} to the equivalence relation from Lemma \ref{blocksd3}. To this end, for $a, b \in [n], a \neq b,$ denote by $A(a, b)$ the ``aligned'' matrix $\left( \begin{array}{ll} a & b\\ a & b 
\end{array}\right),$ and by $R(a, b)$ the ``reversed'' matrix 
$\left( \begin{array}{ll} a & b\\ b & a 
\end{array}\right).$
\begin{defin}
An ($m$-)\emph{pattern} is a sequence $(\Delta_l, \Lambda_l)_{l \in [m]}$, where $\Delta_l$ is one of the $2 \times 2$ matrices
\begin{align}
\label{d3-delta-gleiche}
 \left( \begin{array}{ll} 0 & 0\\ 0 & 0 
\end{array}\right),  \left( \begin{array}{ll} 1 & 1\\ 1 & 1 
\end{array}\right),  \left( \begin{array}{ll} 0 & 1\\ 0 & 1 
\end{array}\right),  \left( \begin{array}{ll} 1 & 0\\ 1 & 0 
\end{array}\right),\\
\label{d3-delta-verschiedene}
  \left( \begin{array}{ll} 0 & 0\\ 1 & 1 
\end{array}\right),  \left( \begin{array}{ll} 1 & 1\\ 0 & 0 
\end{array}\right),  \left( \begin{array}{ll} 1 & 0\\ 0 & 1 
\end{array}\right),  \left( \begin{array}{ll} 0 & 1\\ 1 & 0 
\end{array}\right).
\end{align}
and $\Lambda_l \in \{ A(\cdot, \cdot), R(\cdot, \cdot)\}.$
For any choice $(a_l)_{l \in [m]} \in [n]^m, (b_l)_{l \in [m]} \in [n]^m$ such that $a_l \neq b_l$ for all $l \in [m]$, we call the sequence $(n \Delta_l + \Lambda_l(a_l, b_l))_{l \in [m]}$ an \emph{instance} of the pattern $(\Delta_l, \Lambda_l)_{l \in [m]}$. 
\end{defin}

Note that \eqref{d3-delta-gleiche} and \eqref{d3-delta-verschiedene} contain precisely those binary $2 \times 2$ matrices for which the sum over all entries is $\equiv 0 (\on{mod} 2)$.

\begin{defin}
Any of the matrices

$$  \left( \begin{array}{ll} 0 & 1\\ 0 & 1 
\end{array}\right),  \left( \begin{array}{ll} 1 & 0\\ 1 & 0 
\end{array}\right),
     \left( \begin{array}{ll} 1 & 0\\ 0 & 1 
\end{array}\right),  \left( \begin{array}{ll} 0 & 1\\ 1 & 0 
\end{array}\right)$$

is a \emph{step}. Any of the remaining matrices in \eqref{d3-delta-gleiche}, \eqref{d3-delta-verschiedene} is a \emph{plateau}. $\left( \begin{array}{ll} p_{1l} & q_{1l}\\ p_{2l} & q_{2l} 
\end{array}\right)$ is called a step if $\Delta_{l}$ is a step.\\

\end{defin}

\begin{defin}
~
\begin{itemize}
\item[(a)] Let $(\Delta_l, \Lambda_l)_{l \in [m]}$ be a pattern. 
If for each $l \in [m]$ we represent $n \Delta_l + \Lambda_l(a_l, b_l)$ as a matrix 
$\left( \begin{array}{ll} p_{1l} & q_{1l}\\ p_{2l} & q_{2l} 
\end{array}\right)$, we call $(n \Delta_l + \Lambda_l(a_l, b_l))_{l \in [m]}$ \emph{consistent} if both $(p_{1l}, q_{1l})_{l \in [m]}$ and $(p_{2l}, q_{2l})_{l \in [m]}$ are consistent in the sense of Section \ref{implementing}, i.e., $p_{i1} = q_{im}, p_{i2} = q_{i1}, \ldots, p_{im} = q_{i(m-1)}\ (i = 1, 2).$ 
\item[(b)] We call $(\Delta_l, \Lambda_l)_{l \in [m]}$ \emph{negligible}, if the number of consistent instances it admits is of order $\on{o}(n^m)$. Otherwise we call it \emph{substantial}.
\end{itemize}
\end{defin}

\begin{defin}
~
\begin{itemize}
\item[(a)] We say that $(\Delta_l)_{l \in [m]}$ satisfies the \emph{domino condition}, if for each $l \in [m]$ the second column of $\Delta_l$ equals the first column of $\Delta_{l+1}$ (where $m+1$ is identified with $1$). In this case, we also refer to $(\Delta_l)_{l \in [m]}$ as a domino-$\Delta$-sequence.
\item[(b)] We say that $(\Delta_l)_{l \in [m]}$ satisfies the \emph{reverse domino condition}, if for each $l \in [m]$, writing
\begin{equation}
\label{deltadarstellung} \Delta_l = \left( \begin{array}{ll} \alpha_l & \beta_l\\ \gamma_l & \delta_l 
\end{array}\right),\quad \Delta_{l+1} = \left( \begin{array}{ll} \alpha_{l+1} & \beta_{l+1}\\ \gamma_{l+1} & \delta_{l+1} 
\end{array}\right),
\end{equation}
 one has
$ \alpha_{l+1} = \beta_{l},\ \delta_{l+1} = \gamma_l$ (where, again, 
$m+1$ is identified with $1$). In this case, we also refer to 
$(\Delta_l)_{l \in [m]}$ as a reverse (domino) $\Delta$-sequence.
\end{itemize}
\end{defin}

\begin{lemma}
\label{substantial}
~
\begin{itemize}
\item[(a)] Suppose that the pattern $(\Delta_l, \Lambda_l)_{l \in [m]}$ is such that $(\Delta_l)_{l \in [m]}$ satisfies the domino condition. Then for it to be substantial it is necessary and sufficient that all $\Lambda_l\ (l \in [m])$ be equal to $A(\cdot, \cdot)$. 
\item[(b)] Suppose that the pattern $(\Delta_l, \Lambda_l)_{l \in [m]}$ is such that $(\Delta_l)_{l \in [m]}$ satisfies the reverse domino condition. Then for it to be substantial it is necessary and sufficient that all $\Lambda_l\ (l \in [m])$ be equal to $R(\cdot, \cdot)$. 
\end{itemize}
\end{lemma}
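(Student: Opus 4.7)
My plan is a case analysis on the $\Lambda$-sequence $(\Lambda_l)_{l \in [m]}$, carried out in three short steps.

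\emph{Step 1 (reduction to $\Lambda$-identities).} I will first observe that each consistency equation $q_{i,l} = p_{i,l+1}$ rewrites as $n \cdot (\text{difference of two }\Delta\text{-entries}) = (\text{difference of two }\Lambda\text{-entries})$, and that since $\Lambda$-entries lie in $\{1, \ldots, n\}$, their difference sits in $(-n, n)$; hence the multiplier of $n$ must vanish. Under the domino condition this vanishing is automatic for both rows, leaving $\Lambda_l(1,2) = \Lambda_{l+1}(1,1)$ and $\Lambda_l(2,2) = \Lambda_{l+1}(2,1)$. Under the reverse domino condition row~1 reduces the same way, but row~2 must be read according to the orientation-reversing outer-trace cycling attached to a reflection element of the dihedral group $D$; the cancellation now uses $\gamma_l = \delta_{l+1}$ and reduces row~2 to $\Lambda_l(2,1) = \Lambda_{l+1}(2,2)$.

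\emph{Step 2 (case analysis and ruling out mixed patterns).} Using the entries of $A(a,b)$ and $R(a,b)$, I will see that row~1 always delivers $b_l = a_{l+1}$. In part~(a) the row-2 identity produces $b_l = a_{l+1}, b_l = b_{l+1}, a_l = a_{l+1}, a_l = b_{l+1}$ for the transitions $(A,A), (A,R), (R,A), (R,R)$ respectively; combined with row~1 the transitions $(A,R)$ and $(R,A)$ force $a_{l+1} = b_{l+1}$ and $a_l = b_l$, both forbidden by $a_l \neq b_l$. By cyclicity, a mixed $\Lambda$-sequence must contain one of these and therefore admits no consistent instance, so only the homogeneous sequences $\Lambda_l \equiv A$ and $\Lambda_l \equiv R$ survive. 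Part~(b) will proceed in the same way, with the row-2 table swapped so that the favored homogeneous choice becomes all-$R$ rather than all-$A$.

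\emph{Step 3 (counting).} Finally I will count: in the ``correct'' homogeneous case (all-$A$ in~(a), all-$R$ in~(b)) only the cyclic identity $b_l = a_{l+1}$ survives, whose solutions are $m$-cycles on $[n]$ with distinct consecutive entries, of cardinality $(1 + \on{o}(1))\,n^m$, so the pattern is substantial. In the opposite homogeneous case the extra constraint $a_l = b_{l+1}$ combines with $b_l = a_{l+1}$ to give $a_l = a_{l+2}$; the sequence $(a_l)$ is $2$-periodic, admitting only $\on{O}(n^2)$ choices (and requiring $m$ to be even), which is $\on{o}(n^m)$ since $m \geq 3$. The hard part of this plan is Step~1: one must justify cleanly that the reverse domino condition really does correspond to reading the outer trace backwards relative to the inner one, so that row~2 reduces to $\Lambda_l(2,1) = \Lambda_{l+1}(2,2)$ instead of its domino analogue. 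Once this is in place, Steps~2 and~3 are mechanical bookkeeping.
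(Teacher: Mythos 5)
Your proof is correct and follows the same basic mechanism as the paper's: consistency of an instance splits into an ``$n$-part'' (which is exactly the domino, resp.\ reverse domino, condition on the $\Delta_l$) and a ``scalar part'' relating the $a_l, b_l$, and any transition $(A,R)$ or $(R,A)$ in the $\Lambda$-sequence forces a coincidence $a_l = b_l$ or $a_{l+1}=b_{l+1}$, killing (or at least decimating) the count. The paper only works out the transition $A\to R$ with $\Delta_1=\Delta_2=0$ and dismisses everything else as ``similar''; you go further in two useful ways. First, you actually dispose of the wrong homogeneous case (all-$R$ under the domino condition, all-$A$ under the reverse one), where no coincidence $a_l=b_l$ is forced and the correct reason for negligibility is the $2$-periodicity $a_{l+2}=a_l$, giving $\on{O}(n^2)=\on{o}(n^m)$ for $m\ge 3$ --- this is a genuinely different (and necessary) argument from the mixed-transition one, and the paper leaves it implicit. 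Second, you correctly flag that in part (b) the second row of the pattern must be read against the orientation of the trace, so that consistency becomes $\Lambda_l(2,1)=\Lambda_{l+1}(2,2)$; this is indeed the reading forced by the $\tau\gamma^{\nu}$ setup of Subsection \ref{d3-tau}, whereas a literal application of Definition 4.4(a) (forward consistency of both rows) would make part (b) fail, so your ``hard part of Step 1'' is a real issue with the paper's definitions rather than with your argument. One small caveat: the lemma is only true, and only used, for $m\ge 3$ (for $m=2$ the $2$-periodic count $\on{O}(n^2)$ is not $\on{o}(n^2)$), which you correctly assume but the statement does not make explicit.
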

\begin{proof}
(a) Sufficiency is clear. For necessity, assume first that there exists $l \in [m]$ such that $\Lambda_l = A(\cdot, \cdot), \Lambda_{l+1} = R(\cdot, \cdot)$. We may assume without loss that $l = 1$. In the special case that $\Delta_1 = \Delta_2 = \left( \begin{array}{ll} 0 & 0\\ 0 & 0 
\end{array}\right)$, our assumptions imply that
$$ \left( \begin{array}{ll} p_{11} & q_{11}\\ p_{21} & q_{21} 
\end{array}\right) = \left( \begin{array}{ll} a & b\\ a & b 
\end{array}\right), \quad \left( \begin{array}{ll} p_{12} & q_{12}\\ p_{22} & q_{22} 
\end{array}\right) = \left( \begin{array}{ll} b & c\\ c & b 
\end{array}\right).$$
Consistency implies that $c = b$, so the number of consistent instances will be at most of order $\on{O}(n^{m-1})$. In the general case for $\Delta_1, \Delta_2$, this argument is still valid, since the pattern satisfies the domino condition, and since given $\Delta_1$ there are only two choices for $\Delta_2$. The remaining case of (a) ($\Lambda_l = R(\cdot, \cdot)$ for all $l$) as well as (b) can be handled along similar lines.
\end{proof}

In the two subsections that follow, this machinery will be applied to the evaluation of the various sums over multi-indices $\mathbf{P}$ from Proposition \ref{asymptcov} ($m \ge 3$ case). Note that Lemma \ref{substantial} permits to sum only over multi-indices that correspond to substantial patterns and confine the others to the $\on{o}(1)$ term. Also note that when we apply Proposition \ref{asymptcov} to our block matrix spaces, $2n$ will play the role of the $n$ of that proposition.

\subsection{Multi-indices compatible with a cyclic shift in $\on{D}_{2m}$.}
\label{d3cycle}

In this subsection we consider an element $g \in \on{D}_{2m}$ of the form $g = \gamma^{\nu}$, where $\nu \in \nn$ and $\gamma$ is the $m$-cycle $(1\, 2\, 3\, \ldots\, m)$ as in the paragraph before Prop.\ \ref{asymptcov}. For ${\mathbf P} \in S^{\on{good}}(\hat{\pi}_g)$ let $(\Delta_l({\mathbf P}), \Lambda_l({\mathbf P}))_{l \in [m]}$ be the pattern of which the sequence
$$ \left( \begin{array}{ll} p_{1, l} & q_{1, l}\\ p_{2, (l+\nu)} & q_{2, (l + \nu)} 
\end{array}\right)_{l \in [m]} $$ is an instance. By consistency of 
${\mathbf P}$, $(\Delta_l({\mathbf P}), \Lambda_l({\mathbf P}))_{l \in [m]}$ will satisfy the domino condition, and it is substantial if, and only if, all $\Lambda_l({\mathbf P})$ are equal to $A(\cdot, \cdot)$. Then
\begin{align}
\frac{1}{(2n)^m} \sum_{P \in S_n^{\on{good}}(\hat{\pi}_g)} &\prod_{l=1}^m \erw(a_n(P_{1, l}) a_n(P_{2, g(l)})) \nonumber\\ 
\label{leadingsubstantial}= 
\frac{1}{(2n)^m} \sum_{\substack{ P \in S_n^{\on{good}}(\hat{\pi}_g),
\\ (\Delta_l({\mathbf P}), \Lambda_l({\mathbf P}))\\ \text{\rm substantial.}
}} &\prod_{l=1}^m \erw(a_n(P_{1, l}) a_n(P_{2, g(l)})) + \on{o}(1).
\end{align}

Now we turn to the evaluation of the leading order term, for which we have to consider substantial patterns only. Observe that it is not the choices of $a_l, b_l$ but only the choice of $\Delta_l$ that determines the value of $\erw(a_n(p_{1,l}, q_{1,l}) a_n(p_{2,(l+\nu)}, q_{2, (l+\nu)}))$. Indeed, if $\Delta_l$ is one of the four possibilities in \eqref{d3-delta-gleiche}, then $(p_{1,l}, q_{1,l}) = (p_{2,(l+\nu)}, q_{2,(l+\nu)})$ and thus
$$ \erw(a_n(p_{1l}, q_{1l})a_n(p_{2,(l+\nu)}, q_{2,(l+\nu)}) =  \erw(a_n(p_{1,l}, q_{1,l})^2) = - \erw(a_n(p_{1,l}, q_{1,l}) \overline{a_n(p_{1,l}, q_{1,l})}) = - \sigma^2$$ in view of our variance requirement for matrix entries away from the block diagonals. On the other hand,
$ \Delta_l =  \left( \begin{array}{ll} 0 & 0\\ 1 & 1 
\end{array}\right)$ means $(p_{2,(l+\nu)}, q_{2,(l+\nu)}) = (n + p_{1,l}, n+q_{1,l})$, hence $a_n(p_{2, (l+\nu)}, q_{2,(l+\nu)}) = - a_n(p_{1,l}, q_{1,l}) = 
\overline{ a_n(p_{1,l}, q_{1,l})}$ and thus $\erw( a_n(p_{1,l}, q_{1,l}) a_n(p_{2,(l+\nu)}, q_{2,(l+\nu)})) = \sigma^2.$ Interchanging the roles of $(p_{1,l}, q_{1,l})$ and $(p_{2,(l+\nu)}, q_{2,(l+\nu)})$ leads to the same result for $ \Delta_l =  \left( \begin{array}{ll} 1 & 1\\ 0 & 0 
\end{array}\right)$. As to $\Delta_l = \left( \begin{array}{ll} 1 & 0\\ 0 & 1 
\end{array}\right)$, we then have that $(p_{1,l}, q_{1,l}) = (n+a, b), (p_{2,(l+\nu)}, q_{2,(l+\nu)}) = (a_l, n+b_l)$, hence $a_n(p_{1l}, q_{1l}) = a_n(p_{2l}, q_{2l})$ and thus $\erw( a_n(p_{1,l}, q_{1,l}) a_n(p_{2,(l+\nu)}, q_{2,(l+\nu)})) = - \sigma^2.$ Again switching roles, we obtain the same result for $\Delta_l = \left( \begin{array}{ll} 0 & 1\\ 1 & 0 
\end{array}\right).$ \\

Taking also the domino condition into account, these considerations yield:

\begin{lemma}
\label{gleichbleibtgleich}
~
\begin{itemize}
\item[(a)] If $(p_{1l}, q_{1l}) = (p_{2l}, q_{2l})$ for some $l \in [m]$, then the same holds true for all $l \in [m]$, and the product of the covariances equals $(-1)^m \sigma^{2m}$.
\item[(b)] If $(p_{1l}, q_{1l}) \neq (p_{2l}, q_{2l})$ for some $l \in [m]$, then the same holds true for all $l \in [m]$, and the product of the covariances equals $(-1)^{\# \{ \text{\rm steps\ in\ the\ multi-index}\}} \sigma^{2m}$. Now, by consistency, the number of steps is even, and so the product of the covariances equals $\sigma^{2m}$.
\end{itemize}
\end{lemma}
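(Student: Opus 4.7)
The plan is to handle (a) and (b) in parallel through three ingredients: first, translate the coincidence hypothesis on $(p_{1,l},q_{1,l})$ and $(p_{2,l},q_{2,l})$ into a statement about which of the two lists \eqref{d3-delta-gleiche}, \eqref{d3-delta-verschiedene} the matrix $\Delta_l$ belongs to; second, a propagation argument based on the domino condition showing that this membership is cyclically stable; third, the per-factor covariance values already displayed in the paragraph preceding the lemma, together with a small parity argument for part~(b).

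To set the stage I would invoke Lemma \ref{substantial}(a): since we are working inside a substantial cyclic pattern, $\Lambda_l = A(\cdot,\cdot)$ for every $l$, and each paired slot of the multi-index reads $n\Delta_l + \left(\begin{smallmatrix} a_l & b_l \\ a_l & b_l\end{smallmatrix}\right)$. The coincidence of its two rows is therefore equivalent to the coincidence of the two rows of $\Delta_l$ itself, that is, to $\Delta_l \in \eqref{d3-delta-gleiche}$; non-coincidence corresponds to $\Delta_l \in \eqref{d3-delta-verschiedene}$. The propagation step is then immediate from the structural observation that \eqref{d3-delta-gleiche} is precisely the list of binary $2 \times 2$ matrices whose two columns each have equal entries, while \eqref{d3-delta-verschiedene} is the list of those whose columns each have unequal entries. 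The domino condition---the second column of $\Delta_l$ equals the first column of $\Delta_{l+1}$---transports this column-property from $\Delta_l$ to $\Delta_{l+1}$, so running around the cycle shows that membership in the appropriate list at any single $l$ forces it for every $l$.

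With the ``all $l$'' conclusion in hand, the value of the product is read off from the calculations displayed in the paragraph immediately preceding the lemma: each $\Delta_l \in \eqref{d3-delta-gleiche}$ contributes a factor $-\sigma^2$, which yields $(-1)^m\sigma^{2m}$ and completes (a). For (b), the two plateaus $\left(\begin{smallmatrix}0&0\\1&1\end{smallmatrix}\right)$, $\left(\begin{smallmatrix}1&1\\0&0\end{smallmatrix}\right)$ contribute $+\sigma^2$ while the two steps $\left(\begin{smallmatrix}1&0\\0&1\end{smallmatrix}\right)$, $\left(\begin{smallmatrix}0&1\\1&0\end{smallmatrix}\right)$ contribute $-\sigma^2$, giving an intermediate answer of $(-1)^N\sigma^{2m}$ with $N$ the number of steps. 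The main (and only) remaining obstacle is the parity claim that $N$ is even. My plan for this is to encode the block containing $p_{1,l}$ by $\epsilon_l \in \{0,1\}$ according to whether $p_{1,l} \le n$ or $p_{1,l} > n$; within family \eqref{d3-delta-verschiedene} a step is characterised by the top row of $\Delta_l$ having unequal entries, and the consistency relation $q_{1,l} = p_{1,l+1}$ then identifies $N$ with the number of transitions in the cyclic binary sequence $(\epsilon_l)_{l\in[m]}$, which is automatically even. Hence the product collapses to $\sigma^{2m}$, proving (b).
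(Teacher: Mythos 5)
Your proposal is correct and follows essentially the same route as the paper, which states the lemma as a summary of the per-$\Delta_l$ covariance computations in the preceding paragraph "taking also the domino condition into account"; your column-constancy observation and the cyclic-transition parity argument are exactly the details the paper leaves implicit (the latter appearing explicitly only in the proof of Lemma \ref{tauoddeven}). No gaps.
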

Lemma \ref{gleichbleibtgleich} implies that for $m$ odd, the leading order term in \eqref{leadingsubstantial} vanishes. So we only have to consider the case that $m = 2d\ (d \in \nn)$. To this end, it is fundamental to observe that the sequence of $2 \times 2$--matrices $\Delta_l$ and the sequence of scalars $a_l, b_l$ may be chosen independently of one another. Furthermore, the number of  $\Delta$-sequences satisfying the domino condition (domino-$\Delta$-sequences for short) with both rows equal is the same as the number of domino-$\Delta$-sequences  with both rows different. The number of domino-$\Delta$-sequences  with both rows equal is two times the number of domino-$\Delta$-sequences  matrices with both rows equal and $\alpha_1 = 1$, where we have represented $\Delta_1$ as in \eqref{deltadarstellung}. This implies that the asymptotic contribution is
\begin{equation}
\label{zaehlproblem-id}
 4 \sigma^{2m}\ \# \{ {\mathbf P}: {\mathbf P}\ \text{\rm has\ identical\ rows\ and}\ p_{11} = 1\}.
 \end{equation}
To determine this combinatorial quantity, observe that a plateau must be of the form $\left( \begin{array}{ll} 0 & 0\\ 0 & 0 
\end{array}\right)$ if the rightmost step that precedes it is of the form $\left( \begin{array}{ll} 1 & 0\\ 1 & 0 
\end{array}\right)$, and that 
a plateau must be of the form $\left( \begin{array}{ll} 1 & 1\\ 1 & 1 
\end{array}\right)$ if the rightmost step that precedes it is of the form $\left( \begin{array}{ll} 0 & 1\\ 0 & 1 
\end{array}\right)$. So the multi-indices in \eqref{zaehlproblem-id} are uniquely determined once the places at which the steps occur (if any) are fixed. Observe that it suffices to specify the set (necessarily of even order) at which the steps occur, since the steps occur in alternating order, beginning with a $\left( \begin{array}{ll} 1 & 0\\ 1 & 0 
\end{array}\right)$. So the combinatorial factor in \eqref{zaehlproblem-id} equals
$$ \sum_{j=0}^d {m \choose 2j} = 2^{(m-1)}.$$

Summarizing the discussion in this subsection, we have seen that for $g$ a cyclic shift, the leading order term in \eqref{leadingsubstantial} is
\begin{equation}
\label{summaryshift}
\begin{cases}
0, & m\ \text{\rm odd},\\
2^{(m+1)} \sigma^{2m}, & m\ \text{\rm even}.
\end{cases}
\end{equation}

\subsection{Multi-indices compatible with a product $g = \tau \gamma^{\nu}$}
\label{d3-tau}

Recall from the paragraph before Prop.\ \ref{asymptcov} that $\tau$ is the reflection that swaps $1$ with $m$, $2$ with $m-1$ and so on. Then we are facing a consistency requirement of the form
$$ \left( \begin{array}{ll} p_{1,l} & q_{1,l}\\ p_{2, \nu + l-1} &  
q_{2, \nu + l-1}
\end{array}\right) = \Delta_l + \Lambda_l(a, b)$$
$$ \left( \begin{array}{ll} p_{1,l+1} & q_{1,l+1}\\ p_{2, \nu + l-2} &  
q_{2, \nu + l-2}
\end{array}\right) = \Delta_{l+1}+ \Lambda_{l+1}(a, b)$$
for all $l \in [m]$. In the one hand, this implies that $(\Delta_l)_{l \in [m]}$ must satisfy the reverse domino condition. In view of Lemma \ref{substantial}, the pattern $(\Delta_l, \Lambda_l)_{l \in [m]}$ will be negligible unless for all $l$ we have that $\Lambda_l = R(\cdot, \cdot).$\\

We have seen in Subsection \ref{d3cycle}, that it is the induced sequence $(\Delta_l)_{l \in [m]}$ that determines the sign of the product of covariances that comes from a $\hat{\pi}_g$-compatible multi-index $\mathbf{P}$. But observe that replacing the condition 
$\Lambda_l = A(\cdot, \cdot)$ from Subsection \ref{d3cycle} by 
$\Lambda_l = R(\cdot, \cdot)$ means that the plus and minus signs are swapped with respect to the previous subsection. This means that now the matrices
$$ \left( \begin{array}{ll} 0 & 0\\ 0 & 0 
\end{array}\right), \left( \begin{array}{ll} 1 & 1\\ 1 & 1 
\end{array}\right), \left( \begin{array}{ll} 0 & 1\\ 0 & 1 
\end{array}\right), \left( \begin{array}{ll} 1 & 0\\ 1 & 0 
\end{array}\right), \left( \begin{array}{ll} 1 & 0\\ 0 & 1 
\end{array}\right), \left( \begin{array}{ll} 0 & 1\\ 1 & 0
\end{array}\right)$$
will contribute a factor of $+\sigma^2$, while the matrices
$$ \left( \begin{array}{ll} 1 & 1\\ 0 & 0 
\end{array}\right), \left( \begin{array}{ll} 0 & 0\\ 1 & 1 
\end{array}\right)$$
will contribute a factor of $- \sigma^2$. \\

Now we study the step from $\Delta_l$ to $\Delta_{l+1}$ according to the reverse domino condition. It turns out that the eight matrices that occur in $\Delta$-sequences fall into two disjoint subsets that are closed under these steps. They are given in \eqref{d3invar1} and \eqref{d3invar2}. 
\begin{equation}
\label{d3invar1}
\left( \begin{array}{ll} 1 & 1\\ 1 & 1 
\end{array}\right), \left( \begin{array}{ll} 1 & 0\\ 0 & 1 
\end{array}\right), \left( \begin{array}{ll} 0 & 0\\ 0 & 0 
\end{array}\right), \left( \begin{array}{ll} 0 & 1\\ 1 & 0 
\end{array}\right)
\end{equation}
Note that all matrices in this group correspond to positive covariances.
\begin{equation}
\label{d3invar2}
\left( \begin{array}{ll} 0 & 1\\ 0 & 1 
\end{array}\right), \left( \begin{array}{ll} 1 & 0\\ 1 & 0 
\end{array}\right), \left( \begin{array}{ll} 0 & 0\\ 1 & 1 
\end{array}\right), \left( \begin{array}{ll} 1 & 1\\ 0 & 0 
\end{array}\right)
\end{equation}
This group contains the two matrices that correspond to negative covariances.\\

Now we turn to the evaluation of the sum in \eqref{leadingsubstantial} in the case $g = \tau \gamma^{\nu}$. For ease of exposition we consider the special case $\nu = 0$, i.e., $g = \tau$. The case of a general $\nu$ will lead to the same result, since even as a $ \nu \in \{1, \ldots, m-1\}$ means that the lower row of the obvious matrix representation of ${\mathbf P}$ is cyclically shifted w.r.t.\ the upper row, the images of the blocks of the equivalence relation that corresponds to $g = \tau$ under the shift will be exactly the blocks of the equivalence relation that corresponds to $ g = \tau \gamma^{\nu}$.\\

First we reduce the problem to the study of $\Delta$-sequences. The sum of interest is over ${\mathbf P}$ with $(\Delta_l({\mathbf P}), \Lambda_l({\mathbf P}))$, meaning that $\Lambda_l({\mathbf P}) = R(\cdot, \cdot)$ for all $l \in [m]$.  
To illustrate what compatibility with $\tau$ means for the $\Lambda$-sequence, consider the case $m = 5$. In this case we obtain a structure of the form

$$ \left( \begin{array}{ll} a & b\\ a & e 
\end{array}\right), \left( \begin{array}{ll} b & c\\ e & d 
\end{array}\right), \left( \begin{array}{ll} c & d\\ d & c 
\end{array}\right), \left( \begin{array}{ll} d & e\\ c & b 
\end{array}\right), \left( \begin{array}{ll} e & a\\ b & a
\end{array}\right).
$$
The third matrix is $R(c, d)$. The upper row of the second matrix and the lower row of the fourth matrix together constitute $R(b, c)$, and so on. If $m$ is even, no matrix in the sequence is of the form $R(\cdot, \cdot)$, but the pairings of upper and lower rows as indicated by $\tau$ still lead to $R$-matrices. Consequently, if the pattern is substantial, compatibility with $\tau$ on the level of the sequences $(a\, b\, \ldots) \in [n]^m$ comes for free. \\

Now we turn to the possible sequences of $\Delta$s. Suppose that
$\Delta_1$ has been chosen. Recall that if $g = \tau$, the choice of $\Delta_1$ fixes the upper row of the first matrix an the lower row of the $m$-th matrix. Taking again the case $m = 5$ for illustration, the starred positions are fixed by this choice and by consistency: 

$$ \left( \begin{array}{ll} \ast & \ast\\ \ast & ~ 
\end{array}\right), \left( \begin{array}{ll} \ast & ~\\ ~ & ~ 
\end{array}\right), \left( \begin{array}{ll} ~ & ~\\ ~& ~ 
\end{array}\right), \left( \begin{array}{ll} ~ & ~\\ ~ & \ast 
\end{array}\right), \left( \begin{array}{ll} ~ & \ast\\ \ast & \ast
\end{array}\right).
$$

Now proceed in the upper row from left to right. In the example, the first slot without a star is the upper right slot of the second matrix. We may choose the corresponding entry arbitrarily from $\{0, 1\}$. The upper left entry of the third matrix is then determined by consistency, while the upper right entry can again be freely chosen from $\{ 0, 1\}$, and so on. The last entry that can be freely chosen in this way is the upper right entry of the fourth (i.e., the $(m-1)$th) matrix, since the upper right entry of the last matrix is already determined by consistency with the first matrix. In total, given $\Delta_1$, we have had $2^{(m-2)}$ choices for the first row.\\

We now claim that there is one, and only one, way to fill the empty slots in the lower row with elements from $\{0, 1\}$ such that we obtain a $\Delta$-sequence induced by a multi-index which is compatible with $\hat{\pi}_{\tau}$. Proceeding from right to left, the first empty slot is the lower left slot of the fourth (i.e., $(m-1)$-th) matrix. Now, the lower row of the $(m-1)$-th matrix is paired, under $\tau$, with the upper row of the second matrix. Taken together, both rows constitute the matrix $\Delta_2$. In particular, the sum of the entries of both rows must be $\equiv 0 (\on{mod} 2)$. This problem has a unique solution in $\{0, 1\}$. Choosing the lower left entry of the $(m-1)$-th matrix as this solution, the lower right entry of the $(m-2)$-th matrix is fixed by consistency. As above, the lower left entry of this matrix is determined by congruence, this time with respect to the upper row of the third matrix. In this way, one proceeds in a uniquely determined way from right to left, until the first matrix is reached, where the lower left entry is already fixed by consistency with the $m$-the matrix.\\

If $\Delta_1$ is from the list in \eqref{d3invar1}, all $\Delta_l$ constructed in this way will also belong to this list, and the corresponding ${\mathbf P}$ will contribute $+ \sigma^{2m}$ to \eqref{leadingsubstantial}. If 
$\Delta_1$ is from the list in \eqref{d3invar2}, all $\Delta_l$ will also belong to this list, and this time negative factors may occur, namely, factors related to the plateau matrices 
\begin{equation}
 \left( \begin{array}{ll} 0 & 0\\ 1 & 1 
\end{array}\right),\quad\left( \begin{array}{ll} 1 & 1\\ 0 & 0 
\end{array}\right). 
\end{equation}

\begin{lemma}
\label{tauoddeven}
If $m$ is odd, then any $\hat{\pi}_{\tau}$-consistent multi-index
such that one, and thus all, of the matrices from the associated $\Delta$-sequence belong to the list in \eqref{d3invar2}, will contribute $- \sigma^{2m}$ to the sum in \eqref{leadingsubstantial}. If $m$ is even, each such multi-index contributes $+ \sigma^{2m}$.
\end{lemma}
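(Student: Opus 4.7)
The plan is to recast the sign in the product of covariances as a mod-$2$ count. Within the invariant list \eqref{d3invar2}, each of the two steps $\left(\begin{smallmatrix}0&1\\0&1\end{smallmatrix}\right)$, $\left(\begin{smallmatrix}1&0\\1&0\end{smallmatrix}\right)$ contributes a factor $+\sigma^2$ in \eqref{leadingsubstantial}, while each of the two plateaus $\left(\begin{smallmatrix}0&0\\1&1\end{smallmatrix}\right)$, $\left(\begin{smallmatrix}1&1\\0&0\end{smallmatrix}\right)$ contributes $-\sigma^2$. Hence the product equals $(-1)^{N_P}\sigma^{2m}$, where $N_P$ denotes the number of indices $l\in [m]$ for which $\Delta_l$ is a plateau. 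So it suffices to show $N_P \equiv m \pmod 2$.

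First I would tabulate all admissible transitions $\Delta_l \to \Delta_{l+1}$ that the reverse domino condition $\alpha_{l+1}=\beta_l$, $\delta_{l+1}=\gamma_l$ allows within \eqref{d3invar2}. Labelling the four matrices as above by $S_1, S_2, P_1, P_2$, a direct check gives the transitions $S_1\to\{S_2,P_2\}$, $S_2\to\{S_1,P_1\}$, $P_1\to\{S_1,P_1\}$, $P_2\to\{S_2,P_2\}$. Inspecting the top-left entry $\alpha_l$ in each case, one observes that $\alpha_{l+1}\neq\alpha_l$ whenever $\Delta_l$ is a step, and $\alpha_{l+1}=\alpha_l$ whenever $\Delta_l$ is a plateau. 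In short, steps flip $\alpha$, plateaus preserve it.

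Next I would exploit the cyclic structure of the $\Delta$-sequence: since $\alpha_{m+1}=\alpha_1$, the total number of flips around the cycle must be even, i.e., the number of step positions $N_S$ is even. Because $N_P + N_S = m$, this forces $N_P \equiv m \pmod 2$. Plugging into the sign formula from the first paragraph yields $-\sigma^{2m}$ when $m$ is odd and $+\sigma^{2m}$ when $m$ is even, which is precisely the statement of the lemma.

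The main obstacle is choosing the right parity invariant. Column- or row-sum parities of $\Delta_l$ fail to separate steps from plateaus within \eqref{d3invar2} (all four matrices share the same column-sum parity), so such invariants cannot detect the sign. The useful observation is that the single coordinate $\alpha_l$ already encodes the step/plateau dichotomy via its behaviour under the reverse domino condition, which collapses a potentially case-heavy argument into a one-line cyclic parity count. The only computation required is the four-line verification of the transition table; everything else is an immediate consequence of cyclicity.
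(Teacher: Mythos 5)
Your argument is correct and is essentially the paper's own proof: both reduce the sign to $(-1)^{N_P}$ with $N_P$ the number of plateaus, and both deduce that the number of steps is even from the cyclic consistency of the upper row of the $\Delta$-sequence (your observation that steps flip $\alpha_l$ while plateaus preserve it is just the relation $\alpha_{l+1}=\beta_l$ read off the top row, which is exactly how the paper argues, so the transition table is not really needed). One small inaccuracy in your closing remark: within \eqref{d3invar2} the row- and column-sum parities of $\Delta_l$ do in fact separate steps from plateaus (the steps have both column sums even and both row sums odd, the plateaus the opposite), so that aside is factually wrong, though it plays no role in the validity of the proof.
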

Since we have already remarked that any of the possible eight choices for $\Delta_1$ gives rise to $2^{m-2}$ $\Delta$-sequences such that the corresponding ${\mathbf P}$ are compatible with $\tau$, Lemma \ref{tauoddeven} immediately yields

\begin{cor}
For any fixed consistent sequence $R^{\ast} = R(a_l, b_l)_{l \in [m]}$, 
$$\sum_{\substack{ {\mathbf P} \in S_n^{\on{good}}(\hat{\pi}_{\tau}),
\\ {\mathbf P} =\, n\Delta({\mathbf P}) + R^{\ast}}} \prod_{l=1}^m \erw(a_n(P_{1, l}) a_n(P_{2, g(l)})) = \begin{cases}
0, & m\ \text{\rm odd},\\
2^{m+1} \sigma^{2m}, & m\ \text{\rm even}.
\end{cases}
$$
\end{cor}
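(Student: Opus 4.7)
The plan is to reduce the sum to a counting problem by combining the enumeration of $\hat{\pi}_{\tau}$-compatible $\Delta$-sequences carried out in the preceding paragraphs with the sign information supplied by Lemma \ref{tauoddeven}.

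First I would observe that, since $R^{\ast}$ is fixed, the multi-indices $\mathbf{P}$ in the sum are in bijection with the $\Delta$-sequences $(\Delta_l)_{l \in [m]}$ which arise from some $\hat{\pi}_{\tau}$-consistent multi-index. The construction given just before the statement of Lemma \ref{tauoddeven} shows that each such $\Delta$-sequence is determined by (i) a choice of $\Delta_1$ (eight possibilities), and (ii) $m-2$ free binary choices filling in the upper-right entries of $\Delta_2, \ldots, \Delta_{m-1}$; all remaining entries are then forced, either by the consistency relations or by the $\tau$-pairing of rows. Thus there are exactly $2^{m-2}$ admissible sequences for each fixed $\Delta_1$, and $8\cdot 2^{m-2} = 2^{m+1}$ admissible $\Delta$-sequences in total.

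Next I would split the contribution according to which of the invariant subsets \eqref{d3invar1} or \eqref{d3invar2} the sequence belongs to. If $\Delta_1$ lies in \eqref{d3invar1}, then by the invariance under the reverse-domino step every $\Delta_l$ also lies in \eqref{d3invar1}, and each such matrix contributes a factor $+\sigma^2$ to the product of covariances; the total contribution from these $\Delta$-sequences is therefore $(4 \cdot 2^{m-2})\, \sigma^{2m} = 2^m \sigma^{2m}$. If instead $\Delta_1$ lies in \eqref{d3invar2}, all $\Delta_l$ lie in \eqref{d3invar2}, and Lemma \ref{tauoddeven} says that each of these multi-indices contributes $(-1)^m \sigma^{2m}$; summing gives $(-1)^m 2^m \sigma^{2m}$.

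Combining both cases, the sum equals
\[
\bigl(1 + (-1)^m\bigr)\, 2^m\, \sigma^{2m},
\]
which is $0$ for odd $m$ and $2^{m+1}\sigma^{2m}$ for even $m$, as claimed. The only non-bookkeeping step is the invariance of \eqref{d3invar1} and \eqref{d3invar2} under the reverse-domino transition, which amounts to inspecting, for each of the eight matrices, the two matrices that can legitimately follow it; this is the main (and essentially only) obstacle, but it is a finite case check. Everything else follows immediately from the enumeration already performed and from Lemma \ref{tauoddeven}.
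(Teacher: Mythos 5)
Your proposal is correct and follows essentially the same route as the paper: the paper likewise counts $8\cdot 2^{m-2}=2^{m+1}$ admissible $\Delta$-sequences, splits them evenly between the two invariant lists \eqref{d3invar1} and \eqref{d3invar2}, and combines the constant $+\sigma^{2m}$ contribution of the first list with the $(-1)^m\sigma^{2m}$ contribution of the second via Lemma \ref{tauoddeven}. The only difference is presentational: you spell out the cancellation $\bigl(1+(-1)^m\bigr)2^m\sigma^{2m}$ explicitly, whereas the paper states the conclusion as an immediate consequence of the preceding enumeration.
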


\begin{proof}[Proof of Lemma \ref{tauoddeven}] A matrix from the list \eqref{d3invar2} contributes a factor of $- \sigma^2$ if, and only if, its upper row is $(1, 1)$ or $(0, 0)$. So it suffices to consider only the upper row of the multi-index. By consistency, its first and its last entry must be equal, hence the number of steps, i.e., of occurrences of elements of the set $\{(1, 0), (0, 1)\}$ in the upper row must be even. So the number of occurrences of one of 
$(1, 1)$ or $(0, 0)$ is odd, if, and only if, $m$ is.
\end{proof}

\section{Class CI}
\label{c1}

In view of the similarities between the block structures of the CI and DIII classes, the description of the equivalence classes that was given in Lemma \ref{blocksd3} for the DIII case also applies to the CI case. In particular, we may reuse the terminology which was set up in Subsection \ref{d3fundamentals} and in particular the list of possible choices for the $\Delta_l$ in the present context. A crucial difference, though, concerns the way in which the various possibilities for $\Delta_l$ and $\Lambda_l$ translate into the sign of the expectation $\erw\left( a_n(p_{1l}, q_{1l}) a_n(p_{2l}, q_{2l}) \right)$ if one writes $ \left( \begin{array}{ll} p_{1l} & q_{1l}\\ p_{2l} & q_{2l} 
\end{array}\right) = n \Delta_l + \Lambda_l(a, b)$ with $a, b \in [n]$. Note that since the blocks of Class CI matrices are symmetric, the choice of $\Lambda_l$ does not affect the outcome. So we may fix $\Lambda_l = A(\cdot, \cdot)$ for all $l \in [m]$ as we study the role of $\Delta_l$. \\

If $\Delta_l$ is one of the matrices from \eqref{d3-delta-gleiche}, then $a_n(p_{2l}, q_{2l}) = a_n(p_{1l}, q_{1l})$, hence, the matrix entries being real, $\erw( a_n(p_{1l}, q_{1l}) a_n(p_{2l}, q_{2l}))
= \erw(|a_n(p_{1l}, q_{1l})|^2) = +\sigma^2.$\\

In the case $\Delta_l =  \left( \begin{array}{ll} 0 & 0\\ 1 & 1 
\end{array}\right)$, we have that $(p_{2l}, q_{2l}) = (n + p_{1l}, n+q_{1l})$ and thus $a_n(p_{2l}, q_{2l}) = - a_n(p_{1l}, q_{1l}).$ Therefore, $\erw( a_n(p_{1l}, q_{1l}) a_n(p_{2l}, q_{2l})) = - 
 \erw(|a_n(p_{1l}, q_{1l})|^2) = - \sigma^2$, and likewise for
 $\Delta_l =  \left( \begin{array}{ll} 1 & 1\\ 0&0 
\end{array}\right)$.\\

In the case $\Delta_l =  \left( \begin{array}{ll} 1 & 0\\ 0 & 1 
\end{array}\right)$, we have $(p_{1l}, q_{1l}) = (n + a_l, b_l)$ and $(p_{2l}, q_{2l}) = (a_l, n+b_l)$, hence $a_n(p_{1l}, q_{1l}) = 
a_n(p_{1l}, q_{1l})$. So we obtain that in this case, 
$\erw( a_n(p_{1l}, q_{1l}) a_n(p_{2l}, q_{2l})) = \sigma^2$, and likewise for $\Delta_l =  \left( \begin{array}{ll} 0 & 1 \\ 1 & 0 
\end{array}\right)$.\\

We now turn to the evaluation of 
\begin{equation}
\label{c1summeprodukt}
\sum_{\substack{ P \in S_n^{\on{good}}(\hat{\pi}_g),
\\ (\Delta_l({\mathbf P}), \Lambda_l({\mathbf P}))\\ \text{\rm substantial.}
}} \prod_{l=1}^m \erw(a_n(P_{1, l}) a_n(P_{2, g(l)}))
\end{equation} 
in the case that $g$ is a cyclic shift. As in Lemma \ref{gleichbleibtgleich} we have the dichotomy that either $(p_{1l}, q_{1l})= (p_{2l}, q_{2l})$ for all $l \in [m]$, or  
$(p_{1l}, q_{1l}) \neq (p_{2l}, q_{2l})$ for all $l \in [m]$. But now, for Class CI, the former case yields a contribution $+ \sigma^{2m}$, while in the latter case the contribution is
$$ (-1)^{m - \#\{ \text{\rm steps}\}} \sigma^{2m},$$
which is $(-1) \sigma^{2m}$ if $m$ is odd, and $\sigma^{2m}$ if $m$ is even. Consequently, for $m$ odd, the leading order contribution from the cyclic shifts vanishes.\\

Since for the case $m = 2d, d \in \nn$ the same combinatorial analysis as for the DIII class applies, the contribution of the 
cyclic shifts to the asymptotic variance is the same as in the case of the DIII class.\\

Recall from Subsection \ref{d3-tau} that in the case of the DIII class, the replacement of the domino condition (that arose in the analysis of cyclic shifts) by the reverse domino condition (that applied to the $\tau$ case) resulted in swapping the signs which are associated to each possible choice of $\Delta_l$. Now, observe that the swapped signs from Subsection \ref{d3-tau} are precisely the signs that in the case of the CI class apply to the analysis of cyclic shifts. On the other hand, we have already remarked that in view of the symmetries of the blocks of the CI matrices, replacing the domino by the reverse domino condition has no effect on the signs. Consequently, the analysis of the $\tau$ permutation for the DIII class, as spelled out in Subsection \ref{d3-tau}, applies verbatim to $\tau$ in the CI case. \\

In summary, the asymptotic variances in the case of the CI class are the same as those in the DIII case.

\end{document}